\newtheoremstyle{newthm}
  {\topsep}   
  {\topsep}   
  {\itshape}  
  {0pt}       
  {\scshape} 
  {.}         
  {5pt}  
  {}          
\DeclareMathOperator{\avg}{avg}
\theoremstyle{newthm}
\newtheorem{thm}{Theorem}
\newtheorem{prop}[thm]{Proposition}
\newtheorem{lem}[thm]{Lemma}
\newtheorem{cor}[thm]{Corollary}
\newcommand{\IR}{\mathds{R}}
\newcommand{\IN}{\mathds{N}}
\newcommand{\IZ}{\mathds{Z}}
\newcommand{\SCC}{\mathds{T}}
\renewcommand{\le}{\leqslant}
\renewcommand{\leq}{\leqslant}
\renewcommand{\ge}{\geqslant}
\renewcommand{\geq}{\geqslant}
\newcommand{\xb}{{\overline{x}}}
\newcommand{\xib}{{\overline{x}_i}}
\newcommand{\xjb}{{\overline{x}_j}}
\newcommand{\xkb}{{\overline{x}_k}}
\newcommand{\yb}{{\overline{y}}}
\newcommand{\zb}{{\overline{z}}}
\newcommand{\vect}{{\mathrm{vect}}}
\newtheorem{claim}[thm]{Claim}
\title{A Proof of the Convergence of the Hegselmann-Krause Dynamics on the Circle}
\author{Bernadette Charron-Bost}
\address[Bernadette Charron-Bost]{CNRS, \'Ecole polytechnique}
\email{charron@lix.polytechnique.fr}
\author{Matthias F\"ugger}
\address[Matthias F\"ugger]{MPI Saarbr\"ucken}
\email{mfuegger@mpi-inf.mpg.de}
\author{Thomas Nowak}
\address[Thomas Nowak]{ENS Paris}
\email{thomas.nowak@ens.fr}
\begin{document}

\maketitle

\section{Introduction}

In this note, we give a complete proof that Hegselmann-Krause systems  converge on the circle following  
	the proof strategy developed by Hegarty, Martinsson, and Wedin  in~\cite{HMW14, HMW15}.
By letting $$K_s = \sum_{t\in \IN} \sum_{i\in[n]} \delta(\xb_i(t) ,  \xb_i(t+1)
)^s$$ the $s$-kinetic energy of the system,
	their proof strategy consists in showing that (1) the quadratic kinetic energy $K_2$ is finite,
	(2) the influence graph is eventually constant, and (3) the 1-kinetic energy $K_1$ is finite, which
	immediately implies  the convergence of the sequence of position vectors $\left(  \xb(t) \right)_{t \in \IN}$.
	
To show the finiteness of $K_2$, we present a simple  proof in Section~\ref{sec:lyapunov}
	which is based on a reduction of the HK dynamics on the circle 
	to the HK dynamics on the line.
	
Concerning the eventual stability of influence graphs, we are not able to understand the proof outlined in~\cite{HMW14},
	and we give our proof of this key point in Section~\ref{sec:graph}.
	
For the third point, namely the finiteness of $K_1$, Hegarty, Martinsson, and Wedin introduce the  vector of 
	position differences $x^*(t)$, and show that the sequence  $\left( x^*(t) \right)_{t\in \IN}$ converges to 
	some limit $x^*_{\infty}$ such that $\lVert  x^*(t) - x^*_{\infty} \lVert = O\left( e^{-ct} \right)$.
The argument given in the first version~\cite{HMW14} for the latter point is erroneous, and we fix the argument  in Section~\ref{sec:gradient}.
An alternative argument, based on the finiteness of the quadratic kinetic energy,  is provided in~\cite{HMW15} which 
	still appeals  to the vector of differences $x^*(t)$.
In fact, this argument can be directly applied to the position vector as we
show in Section~\ref{sec:v2}, which allows to circumvent the arguments of
Sections~\ref{sec:gradient} and~\ref{sec:conv} and prove convergence directly.

\subsection*{Notation}
\subsubsection*{The circle}
Let $p $ be a positive real number and let $\SCC = \IR / p\IZ$.
We define $\delta$ on $\SCC$ by:
	$$\delta(\xb, \yb) = \min_{x\in\xb, y\in \yb}|x-y|  \, ,$$
	and we easily check that $\delta$ is a distance on $\SCC$ such that 
	$$0 \leq \delta(\xb, \yb) \leq p/2 \, .$$
Let $\vect(\xb,\yb)$ be the unique element in $\yb - \xb$ which lies in $]-p/2,p/2]$.
Clearly $$| \vect(\xb,\yb) | = \delta(\xb, \yb)\, .$$
When $\xb, \yb, \zb$ are close enough, we have
	$$ \vect(\xb,\zb) =  \vect(\xb,\yb) +  \vect(\yb,\zb) \, ,$$
	but the equality does not hold in general.
	
\noindent This induces a bijection  $ \phi \ : \ \SCC \rightarrow ] -p/2,p/2]$.
	
\subsubsection*{The HK dynamics on the circle}

The Hegselmann-Krause dynamics on the circle with an influence radius $r \in [0,p/2]$  is defined by:
	\begin{equation}\label{eq:hk}
	 \vect(\xib(t), \xib(t + 1)) = \frac{1}{| N_i(t)|} \sum_{j \in N_i(t)}  \vect(\xib(t), \xjb(t)) 
	 \end{equation}
	where $$ N_i(t) = \{ j \in [n] \ | \   \vect(\xib(t), \xjb(t)) \in [-r , r] \} \ .$$
Indeed we have $$-p/2 < \frac{1}{| N_i(t)|} \sum_{j \in N_i(t)}  \vect(\xib(t), \xjb(t)) \leq p/2 \, , $$
	and (\ref{eq:hk}) defines a unique element $\xib(t+1)$ in $\SCC$.
The sets $N_i(t)$ define a directed graph, called  the {\em influence graph at time $t$} and
	denoted by $G_t$.

At each time $t$ and for each agent~$i$, we define the set of left neighbors $L_i(t)$
	and  the set of  right neighbors $R_i(t)$:
	$$L_i(t) = \{ j \in [n] \ : \  \vect(\xjb(t), \xib(t)) \in [0 , r] \} \ \mbox{ and } 
	R_i(t) = \{ j \in [n] \ : \  \vect(\xib(t), \xjb(t)) \in [0 , r] \} \ .$$
For each agent~$i$, let $r_i(t)$ be the (unique)  integer at most equal to $i$ and such that 
	$$ |R_i(t)|  \equiv  r_i(t) -i + 1  \mod n \, .$$
Similarly we define $\ell_i(t)$ for the set $L_i(t)$.

The mapping $\phi$ naturally induces an ordering on the initial positions, and
	we assume that 
	$$\phi(  \overline{x}_{1}(0) ) \leq \dots \leq \phi(\overline{x}_{n}(0)) \, .$$
Like for the line, the HK dynamics on the circle cannot cause agents to cross 
	(even if there may be some cyclic shifts in the ordering with respect to $\phi$).
We say that {\em there is a cut at time $t$ } if there are two consecutive agents $i$ and $i+1$ 
	such that $i +1 \notin R_i(t)$.
We easily observe that if there is a cut at time $t$, the system remains cut forever and
	the dynamics is similar to the HK dynamics on the line. 

In the rest of this note, we fix the influence radius $r=1$.

\section{Lyapunov Function}\label{sec:lyapunov}

We consider an HK system on the circle with $n$ agents and an $n$-vector of initial positions.
From this system, we construct an HK system on the line with $nN$	agents by ``unrolling''  $N$
	samples of the HK system on the circle, and we denote the position of agent~$i$ at time $t$
	for this system by $y_i(t)$.
Then we introduce
	$$ V_N(t) = \sum_{(i,j)\in [nN]^2} \min \big( 1, |y_i(t) - y_j(t)|^2 \big) $$ 
	and $$ W(t) = \sum_{(i,j)\in [n]^2} \min \big( 1,  \delta( \xib(t) , \xjb(t) )^2 \big) $$ 
	for the HK systems on the line and the circle, respectively.

If $N\geq 4$, then we easily check that 
	$$\begin{array}{ll}
	V_N(0)   =  (N-2) W(0) + R_0 & \mbox{ with } 0\leq R_0 \leq 2n^2 \\
	V_N(1)   =  (N-4) W(1) + R_1 & \mbox{ with } 0\leq R_1 \leq 4n^2 \ .
	\end{array}$$
Moreover by definition of $W$, we have $$ 0 \leq W(0) \leq n^2 \ .$$
From~\cite{RMF08} we know that $$ V_N(0) - V_N(1) \geq 4 \sum_{i \in [nN]} |y_i(1) - y_i(0) |^2    \geq 4 \sum_{i =n+1}^{(N-1)n} |y_i(1) - y_i(0) |^2   \ .$$
We let $$ S =  \sum_{i \in [n]} \delta(\xib(0) , \xib(1))^2 \ .$$
Since $R_0\leq 2n^2$, $0 \leq R_1$, and $W(0) \leq n^2$, for every integer $N\geq 4$ we have
	$$W(0) - W(1) \geq 4 \left ( \frac{N-2}{N-4} S - \frac{n^2}{N-4} \right ) \ .$$
When $N$ tends to $+\infty$, that gives $$ W(0) - W(1) \geq 4 S \ . $$
Because the above inequality holds whatever the initial positions of the agents on the circle, 
	we derive the following proposition.
	
\begin{prop}\label{prop:W}
At each time $t\in \IN$,
		$$ W(t) - W(t+1) \geq 4 \sum_{i \in [n]} \delta(\xib(t) , \xib(t+1))^2 \ .$$
\end{prop}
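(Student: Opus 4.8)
The plan is to observe that Proposition~\ref{prop:W} is simply the statement that the inequality $W(0) - W(1) \geq 4S$ derived above holds not only at time $0$ but at every time $t$, and that this follows immediately from the \emph{time-invariance} of the construction. Concretely, the HK dynamics on the circle is autonomous: the rule~\eqref{eq:hk} that produces $\xb(t+1)$ from $\xb(t)$ does not depend on $t$. Hence, for any fixed $t_0\in\IN$, the vector $\xb(t_0)$ is itself a legitimate vector of initial positions for an HK system on the circle, and running the dynamics from it reproduces exactly the tail $\bigl(\xb(t_0+s)\bigr)_{s\in\IN}$ of the original orbit.

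The argument then proceeds as follows. First I would fix an arbitrary $t_0\in\IN$ and define a new HK system on the circle whose initial position vector is $\xb(t_0)$; call its positions $\xb'(s)$, so that $\xb'(s) = \xb(t_0+s)$ for all $s\in\IN$ by uniqueness of the dynamics. Second, I would apply the entire derivation preceding the proposition verbatim to this new system: unroll $N$ copies on the line to get $y_i'(s)$, form $V_N'$ and $W'$, invoke the bound of~\cite{RMF08} on $V_N'(0)-V_N'(1)$, and pass to the limit $N\to+\infty$. The conclusion $W'(0) - W'(1) \geq 4\sum_{i\in[n]}\delta(\xib'(0),\xib'(1))^2$ is exactly $W(t_0) - W(t_0+1) \geq 4\sum_{i\in[n]}\delta(\xib(t_0),\xib(t_0+1))^2$, which is the claimed inequality at time $t_0$. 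Since $t_0$ was arbitrary, the proposition follows.

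The only point requiring a word of care — and the nearest thing to an obstacle — is the phrase already flagged in the text: ``Because the above inequality holds whatever the initial positions of the agents on the circle.'' One must check that nothing in the preceding computation used a property special to $t=0$: the estimates $0\leq R_0\leq 2n^2$, $0\leq R_1\leq 4n^2$, $0\leq W(0)\leq n^2$, the identities relating $V_N$ to $W$, and the energy-decrease bound from~\cite{RMF08} are all consequences of the defining relations of the unrolled line system and of the definition of $W$, none of which privileges the starting time. In particular the ordering assumption $\phi(\xb_1(0))\leq\dots\leq\phi(\xb_n(0))$ is used only to set up notation and may be re-established for $\xb(t_0)$ after a cyclic relabelling (agents do not cross), or simply is not needed for the energy identities at all. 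Hence the shift-invariance argument goes through with no further work, and I do not expect any genuine difficulty here.
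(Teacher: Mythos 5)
Your proposal is correct and is precisely the argument the paper intends: the displayed derivation of $W(0)-W(1)\geq 4S$ is carried out for an arbitrary initial position vector, and the sentence ``Because the above inequality holds whatever the initial positions of the agents on the circle'' is exactly your time-homogeneity/shift argument, which you merely spell out more carefully. Nothing further is needed.
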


Therefore the sequence $\left( W(t)\right)_{t\in\IN}$ is decreasing and nonnegative, and so converges
	to some $W(\infty)$.
Following~\cite{Cha11}, we define the $s$-{\em kinetic energy} of the HK system on the circle by
	$$K_s =  \sum_{t \in \IN} \sum_{i \in [n]} \delta(\xib(t) , \xib(t+1))^s \  $$
	where $s$ is a real number.
Obviously the finiteness of $K_1$ enforces the convergence of the sequence  $\left(  \xb(t) \right)_{t \in \IN}$.

Since at each time $t$, we have $0 \leq W(t) \leq n^2$, we derive the following theorem 
	from Proposition~\ref{prop:W}.

\begin{thm}\label{thm:kinetic2}
The 2-kinetic energy of an HK system on the circle with $n$ agents satisfies
	$$ K_2 \leq n^2/4 \, .$$
\end{thm}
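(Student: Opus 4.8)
The plan is to derive Theorem~\ref{thm:kinetic2} directly from Proposition~\ref{prop:W} by a telescoping-sum argument, exactly the standard Lyapunov-function bookkeeping. First I would fix an arbitrary horizon $T\in\IN$ and sum the inequality of Proposition~\ref{prop:W} over $t=0,1,\dots,T-1$; the left-hand side telescopes to $W(0)-W(T)$, yielding
\begin{equation*}
4 \sum_{t=0}^{T-1} \sum_{i\in[n]} \delta(\xib(t),\xib(t+1))^2 \;\leq\; W(0) - W(T) \;\leq\; W(0) \;\leq\; n^2,
\end{equation*}
where the middle step uses $W(T)\geq 0$ and the last step uses the a priori bound $W(0)\leq n^2$ recorded just before Proposition~\ref{prop:W} (it follows immediately from $\min(1,\cdot)\leq 1$ summed over the $n^2$ pairs).

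Next I would let $T\to\infty$. The partial sums $\sum_{t=0}^{T-1}\sum_{i\in[n]}\delta(\xib(t),\xib(t+1))^2$ are nondecreasing in $T$ and bounded above by $n^2/4$, hence they converge, and their limit is by definition $K_2$. Passing to the limit in the displayed inequality gives $K_2\leq n^2/4$, which is the claim.

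There is essentially no obstacle here: the only things being invoked are the per-step energy dissipation inequality (Proposition~\ref{prop:W}), nonnegativity of $W$, and the uniform bound $W(t)\leq n^2$, all of which are already established in the excerpt. The one point worth stating cleanly is that $K_2$, a priori an element of $[0,+\infty]$, is finite precisely because the monotone partial sums are uniformly bounded; everything else is routine telescoping.
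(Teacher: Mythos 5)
Your proof is correct and is exactly the argument the paper intends: telescoping Proposition~\ref{prop:W} over $t$, bounding the telescoped sum by $W(0)\leq n^2$ via nonnegativity of $W$, and passing to the limit of the monotone partial sums. The paper merely states that the theorem is "derived" from Proposition~\ref{prop:W} and the bound $0\leq W(t)\leq n^2$; you have supplied the same routine details explicitly.
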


\vspace{0.2cm}
Unfortunately the finiteness of $K_2$ is not sufficient to enforce the convergence of the sequence  
	$\left(  \xb(t) \right)_{t \in \IN}$.
The  proof below  consists in showing that for the HK dynamics on the circle, 
	the 1-kinetic energy $K_1$ is also finite, which does imply the convergence of  
	$\left(  \xb(t) \right)_{t \in \IN}$.

\section{Topological changes}\label{sec:graph}

In this section, we study the impact on the kinetic energy of changes in the influence graph.
	
Suppose that the influence graph $G_{t+1}$ at time $t+1$ contains a link that
	is not a link at time~$t$, in $G_t$. 
In the case of HK on the line,  the agent with a new left neighbor 
	that has the greatest identity has no new right neighbor since
	influence graphs are bidirectional.
Actually we show that this key point for the study of the HK dynamics on the line also holds on the circle.

\vspace{0.2cm}
\begin{prop}\label{prop:simuline}
If  $G_{t+1}$ contains a link that is not a link in $G_t$, then there is an agent
	that has a new left neighbor but no new right neighbor at time  $t+1$.
\end{prop}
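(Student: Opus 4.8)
The plan is to argue by contradiction at the level of the *extremal* new link. Suppose $G_{t+1}$ contains a link $(a,b)$ that is not present in $G_t$, so that $\vect(\xb_a(t+1),\xb_b(t+1)) \in [-1,1]$ but $\vect(\xb_a(t),\xb_b(t)) \notin [-1,1]$. Among all new links, the distance on the circle is bounded by $p/2$, so everything takes place within a region small enough that $\vect$ behaves additively and the cyclic order $\phi$ gives a genuine left/right structure there; I would first record this locality observation, using $r=1$ and the fact that a new link means the endpoints moved towards each other from distance just above $1$ to distance at most $1$. Since influence graphs on the circle are symmetric ($j\in N_i(t)\iff i\in N_j(t)$), a new link gives simultaneously an agent with a new right neighbor and an agent with a new left neighbor; the content of the proposition is that these two roles can be made to coincide, or rather that some agent plays \emph{only} the left-neighbor role.

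The key step is to choose the right agent. I would look at the agent $i$ that acquires a new left neighbor and has the largest identity among all such agents (reading identities in the local cyclic window so that ``largest'' makes sense), mirroring the line argument. Call its new left neighbor $j$, so $j$ was at distance $>1$ to the left of $i$ at time $t$ and is at distance $\le 1$ at time $t+1$; this forces, via the HK update rule~\eqref{eq:hk}, that $i$ moved left relative to $j$, i.e.\ the step of $i$ was more negative than the step of $j$. Now suppose for contradiction that $i$ also has a new right neighbor $k$. Then $k$ was at distance $>1$ to the right of $i$ at time $t$ and $\le 1$ at time $t+1$, so $i$ moved right relative to $k$. Combining, $k$'s step is strictly more negative than $i$'s, which is strictly more negative than $j$'s, so $k$ moved left relative to $i$. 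But then I want to derive that $k$ itself acquired a new left neighbor — namely, that the link from $i$ (or some agent between $i$ and $k$) to $k$ is new — contradicting the maximality of $i$'s identity, since $k$ lies to the right of $i$ and hence has larger identity in the local window. The non-crossing property (agents do not cross, stated in the excerpt) is what keeps the ordering consistent and lets me locate $k$ ``strictly to the right'' throughout.

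The main obstacle is the bookkeeping around the circle: ``largest identity'' and ``to the right'' are only well-defined modulo $n$, and one must be careful that the new link does not wrap all the way around — this is where I would lean on $\delta \le p/2$ together with the fact that a new link has length $\le 1 = r$ and the two endpoints were previously at length in $(1, \text{something}\le p/2]$, to confine the whole configuration of relevant agents to an arc on which $\phi$ linearizes everything and the line argument transfers verbatim. A secondary subtlety is handling the degenerate case where the new ``left neighbor'' and new ``right neighbor'' of $i$ are the \emph{same} agent or where distances land exactly on the boundary $1$; these I would dispatch by noting that in the boundary/equality cases the required agent is produced directly. Once the configuration is confined to such an arc, the contradiction argument above closes the proof exactly as in the line case treated by Hegarty--Martinsson--Wedin.
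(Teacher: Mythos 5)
Your opening move --- start from an agent with a new left neighbor, use the contradiction hypothesis to give it a new right neighbor, observe that this right neighbor in turn has a new left neighbor, and iterate --- is the same first step as the paper's proof. But the way you try to \emph{terminate} the chain is where the argument breaks, and it breaks exactly at the point that makes the circle harder than the line. You propose to pick the agent with a new left neighbor of largest identity ``in the local cyclic window'' and to confine all relevant agents to an arc on which $\phi$ linearizes the configuration. No such arc exists in general: each individual new link is short (length greater than $r$ at time $t$, at most $r$ at time $t+1$), but the collection of agents carrying new links may be spread around the entire circle, so the chain you build can wind all the way around. In a cyclic order there is no maximal element; the step ``$k$ lies strictly to the right of $i$, hence has larger identity, contradiction'' fails once the chain wraps, because after finitely many steps you return to an agent already visited rather than reaching a rightmost one. (If the relevant agents \emph{were} confined to an arc, the system would essentially behave as on the line, which is the easy case the paper dismisses in one sentence before the real argument starts.) The relative-displacement inequalities you derive ($k$'s step more negative than $i$'s, which is more negative than $j$'s) are correct but play no role in closing the argument.

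The paper embraces the wrap-around instead of trying to prevent it: by the pigeonhole principle the chain $i_1, i_2, \dots$ revisits some agent, yielding a closed chain of, say, $m$ new links whose total length around the circle is a nonzero multiple $\nu p$ of the perimeter; this winding number is the same at times $t$ and $t+1$ because agents do not cross and the system is not cut. Summing link lengths gives $\nu p > m r$ at time $t$ (each new link had length greater than $r$) and $\nu p \leq m r$ at time $t+1$ (each is now a genuine link), a contradiction. This metric, winding-number count on the closed chain is the idea missing from your proposal; if you keep your structure, you must replace the maximality step by it.
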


\begin{proof}
We proceed by contradiction: suppose that such an agent does not exist.
Because of the above remark, the system is not cut at time $t+1$.
Let $i_1$ be the agent with the smallest identity and a new left neighbor;
	let $i_2$ be one of the new right neighbor of~$i_1$.
We repeat the construction and obtain an infinite chain of agents $i_1,\ i_2, \dots$

By the pigeonhole principle, this chain is closed, i.e., there exist two indices $k$ and $\ell$
	such that $i_k = i_{\ell}$.
That gives two  closed chains of elements in $\SCC$, namely  $x_{i_k}(t), \dots, x_{i_{\ell}}(t)$ 
	and $x_{i_k}(t + 1), \dots, x_{i_{\ell}}(t + 1)$,
	with the same length which is a multiple of $p$, say $\nu p$ since the system is not cut.
By construction, we have:
	$$ (k - \ell) r < \nu p \leq 	(k - \ell) r \ ,$$
	a contradiction.
\end{proof}

\begin{prop}\label{prop:addlink}
If agent~$i$ has a new left neighbor but  no new right neighbor at time $t+1$ in the influence
	graph, i.e.,  $ L_i(t) \subsetneq L_i(t+1) $ and $ R_i(t+1) \subseteq R_i(t)$,
	then there is at least one agent~$j$ which moves by more than  $1/6n$ at time $t+1$ or $t+2$
	$$  \delta( \xib(t) , \xib(t + 1) ) > \frac{1}{6n}   \   \mbox{ or }  \ \delta( \xib(t + 1) , \xib(t + 2) ) > \frac{1}{6n}  \, .$$
\end{prop}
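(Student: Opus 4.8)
The statement says: if agent $i$ gains a new left neighbor (so some agent $j$ that was more than distance $1$ to the left at time $t$ is now within distance $1$) but loses no right neighbor that it had, then within the next two steps some agent moves by more than $1/6n$. I would argue by contraposition. Assume that \emph{every} agent moves by at most $1/6n$ at both times $t+1$ and $t+2$; I will derive that $L_i(t) = L_i(t+1)$, contradicting the hypothesis. The intuition: a new left neighbor means some agent $k$ crossed the distance-$1$ threshold relative to $i$ from the outside in. For that to happen while everyone moves only a tiny amount, agent $k$ must have been \emph{just barely} outside the radius at time $t$, i.e. $\vect(\xkb(t),\xib(t))$ slightly larger than $1$. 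But then at time $t$ the agent $k$ ``sees'' a cluster of agents to its right that does not include $i$ (since $i\notin L_k(t)$ would force... wait, $i\in R_k(t)$?). Let me reconsider: $k\in L_i(t+1)\setminus L_i(t)$ means $\vect(\xkb(t),\xib(t))\in(1,\text{something}]$ — actually $\vect(\xkb(t),\xib(t))$ is just outside $[0,1]$, i.e. in $(1,p/2]$ or negative; since agents do not cross and ordering is essentially preserved, it is in $(1, 1+\epsilon]$ for small $\epsilon$ if motion is small. Symmetrically $i\notin R_k(t)$.

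**Key steps.** First, quantify ``small motion'': if $\delta(\xib(s),\xib(s+1))\le 1/6n$ for all agents at $s=t$ and $s=t+1$, then over two steps no agent moves more than $1/3n < 1/2$, so in particular the cyclic ordering and all the $\vect$ additivity relations I need among nearby agents hold, and membership in a neighbor set can only change for pairs whose distance at time $t$ is within $1/3n$ of the threshold $1$. Second, pin down the new link: let $k\in L_i(t+1)\setminus L_i(t)$, so $\vect(\xkb(t),\xib(t)) = 1 + \eta$ with $0<\eta\le 1/3n$ and $\vect(\xkb(t+1),\xib(t+1))\le 1$. Third — the heart of the argument — analyze where $\xkb(t+1)$ and $\xib(t+1)$ sit. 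Agent $i$'s update is the average of $\vect(\xib(t),\xjb(t))$ over $j\in N_i(t)$; since $R_i(t+1)\subseteq R_i(t)$, the hypothesis tells us $i$ gains no right neighbor, which constrains how far $i$ can move to the right (toward $k$). Meanwhile $k$'s update is an average over $N_k(t)$, and $i\notin N_k(t)$ because $i\notin R_k(t)$; all of $k$'s neighbors lie at position $\le \xkb(t)+1 < \xib(t)$, so $k$ does not move right past... this should force $\vect(\xkb(t+1),\xib(t+1)) \ge 1 + \eta - (\text{sum of the two small displacements projected appropriately}) > 1$ once the constants are chosen, i.e. $k\notin L_i(t+1)$ after all — contradiction. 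Actually the factor-of-two in ``$t+1$ or $t+2$'' suggests the clean contradiction needs both steps: perhaps one shows that if the link appears at $t+1$ with everyone still at time $t{+}1$, one more averaging step at $t+1$ (using that the configuration near $i$ is now tight) forces a move exceeding $1/6n$ at time $t+2$. So I would split into cases on whether the relevant displacements at time $t+1$ are already large.

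**Main obstacle.** The delicate part is bookkeeping the directions: $\vect$ is only additive for nearby points, and I must be careful that ``moving right by $\le 1/6n$'' for $i$ versus ``moving right by $\le 1/6n$'' for $k$ combine with the right signs so that the gap $\vect(\xkb,\xib)$ genuinely shrinks by at most $\approx 1/3n$ — and then show the hypothesis $R_i(t+1)\subseteq R_i(t)$ actually prevents $i$ from moving the full $1/6n$ toward $k$, squeezing out the needed slack. Concretely, because $i$ acquires no new right neighbor, the rightmost agent influencing $i$ at time $t$ is already within distance $1$, so $i$'s rightward displacement is bounded by (number of right neighbors)$\cdot 1 / |N_i(t)| \le$ something, but that is not automatically $<\eta$; the genuine leverage must come from combining the two-step motion bound with the structure of the chain of agents between $k$ and $i$. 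Getting the constant exactly $1/6n$ (rather than, say, $1/4n$) will require tracking that at most $n$ agents sit in the relevant arc and the averaging dilutes any single displacement by at least $1/n$. I expect the write-up to proceed: (1) motion bound $\Rightarrow$ ordering/additivity control; (2) extract the threshold-crossing agent $k$ with gap $1+\eta$, $\eta\le 1/3n$; (3) bound $i$'s rightward move using $R_i(t+1)\subseteq R_i(t)$ and $k$'s rightward move using $i\notin R_k(t)$; (4) if these bounds already contradict the crossing, done at time $t+1$; otherwise some displacement at $t+1$ must exceed $1/6n$, and if not, a second averaging step at $t+1$ with the now-tight local configuration forces a displacement exceeding $1/6n$ at $t+2$.
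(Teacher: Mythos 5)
There is a genuine gap: your main line of attack (steps 2--4, showing that under the small-motion assumption the new left neighbor $k$ cannot actually cross the distance-$1$ threshold) cannot work, because the crossing \emph{is} possible with arbitrarily small motion. If $\vect(\xkb(t),\xib(t))=1+\eta$ with $\eta$ tiny, a leftward drift of $i$ by $\eta$ (far less than $1/6n$) creates the link, and nothing in the hypothesis $R_i(t+1)\subseteq R_i(t)$ forbids a small leftward drift of $i$ --- that hypothesis constrains which agents are within radius, not the magnitude or sign of $i$'s displacement. (There is also a directional slip in your sketch: you speak of $i$ ``moving right toward $k$,'' but $k$ is a \emph{left} neighbor, so approaching $k$ means moving left; the condition on $R_i$ is not a brake on that motion at time $t+1$.) The proposition's content is not that the link cannot appear; it is that its appearance forces a large move one step later.

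The argument the paper actually runs (and which you only gesture at in your final fallback clause) is entirely about the time-$(t+2)$ update. Assume for contradiction that every agent's total displacement over the two steps is below $\mu=1/3n$, and normalize $\xib(t)=0$. Each new left neighbor $j\in L_i(t+1)\setminus L_i(t)$ satisfies $\xb_j(t)<-1$, hence $\xb_j(t+1)<-1+\mu$; since $\xib(t+2)$ is the average of the positions of $N_i(t+1)$ at time $t+1$, each such $j$ contributes a leftward pull of at least $(1-\mu)/|N_i(t+1)|\geq (1-\mu)/n$. The hypothesis $R_i(t+1)\subseteq R_i(t)$ is then used to show that the \emph{remaining} neighbors cannot compensate: $N_i(t+1)$ is obtained from $N_i(t)$ by deleting some right neighbors (the largest elements) and adding the new left ones, and by the elementary monotonicity of averages under deletion of maximal elements (the paper's Claim on multiset averages), the average of the old neighbors' positions at time $t$ over the new neighborhood is at most $\frac{|N_i(t)|}{m+|N_i(t)|}\,\xib(t+1)$. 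Putting these together gives $\xib(t+2)-\xib(t+1)<2\mu-1/n=-\mu$, i.e.\ agent $i$ itself moves by more than $1/3n$ at time $t+2$, the desired contradiction. Your proposal has the right contrapositive setup and correctly locates the threshold-crossing agent, but without the quantitative ``new neighbor at distance $\approx 1$ drags the average by $\approx 1/n$'' step and the average-monotonicity use of $R_i(t+1)\subseteq R_i(t)$, the constant $1/6n$ is not reachable and the proof does not close.
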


\begin{proof}
We proceed by  contradiction and assume that
     no agent moves from time~$t$ to time~$t+2$ by more than~$\mu = \frac{1}{3n}$, i.e., for
     each agent $\ell \in [n]$,
     $$\delta ( \xb_\ell(t+1) , \xb_\ell(t+2) ) + \delta( \xb_\ell(t) , \xb_\ell(t+1) ) < \mu \, .$$

By hypothesis, we have
	$$ L_i(t) \subsetneq L_i(t+1)   \ \mbox{ and }   R_i(t+1) \subseteq R_i(t) \, .$$   
Without loss of generality we assume that~$\xb_i(t) = 0$, which allows us to
identify $\xb_j(t)$ with $\phi\big( \xb_j(t) \big)$ in the rest of this proof.
It is, 
\begin{align}
\xb_i(t+2) = \left(\sum_{k  \in L_i(t+1)}\xb_k(t+1) \right)/ \left(|N_i(t+1)| \right) \, . \label{eq:xtp2}
\end{align}
We will show that $\xb_i(t+2) < -\mu$, which  contradicts
   the assumption that no agent moves by more than~$\mu$ from time~$t$ to time~$t+2$
   since  $\xb_i(t) = 0$.

For every agent~$j \in L_i(t+1)\setminus L_i(t)$,  it holds that $\xb_j(t) < -1$ and
     thus $ \xb_j(t+1) < -1+\mu $.
Let $m = | L_i(t+1)\setminus L_i(t) | $.
We have 
\begin{align}
\xb_i(t+2)  \leq \mu - \frac{m}{|N_i(t+1)|} + \frac{\sum_{k  \in L_i(t) \cup R_i(t+1) }\xb_k(t) }{|N_i(t+1)|} \, .
\end{align}
Since $m \ge 1$ and $|N_i(t+1)|  \le n$, 
\begin{align}
\xb_i(t+2)  \leq \mu - \frac{1}{n} + \frac{\sum_{k  \in L_i(t) \cup R_i(t+1) }\xb_k(t) }{|N_i(t+1)|} \, .
\end{align}

\begin{claim}\label{lem:avg}
Denote by $\avg(X)$ the (equal weight) average of a multiset~$X$ with
     elements in $\IR$.
Let $A$ and $B$ be two multisets with elements in $\IR$.
If $A \subseteq B$ and all elements in multiset $B\setminus A$ are greater or equal
     than $\max(A)$, then $\avg(B) \ge \avg(A)$.
\end{claim}

We now apply the above claim with $A$ defined by the elements 
	$\xb_k(t)$ with $k\in L_i(t) \cup R_i(t+1)$ and 0 repeated $m = |L_{i}(t=1) \setminus L_i(t) |$ times, 
	and $B$  defined with the elements in $A$ and  the $\xb_k(t)$'s with $k\in R_i(t) \setminus R_i(t+1)$.
Thus we obtain
	$$ \frac{\sum_{k  \in L_i(t) \cup R_i(t+1) }\xb_k(t) }{|N_i(t+1)|} \leq \frac{\sum_{k  \in L_i(t) \cup R_i(t) }\xb_k(t) }{|L_i(t+1) \cup R_i(t) |} \, .$$
Therefore 
	$$ \frac{\sum_{k  \in L_i(t) \cup R_i(t+1) }\xb_k(t) }{|N_i(t+1)|} \leq \frac{ |N_i(t)| }{  m + |N_i(t)  |} \, \xib(t+1) \, .$$
It follows that 
	\begin{align}
	\xb_i(t+2)  - \xib(t+1) \leq \mu - \frac{1}{n} +  \frac{ |N_i(t)| }{  m + |N_i(t)  |} \, \xib(t+1) \leq  \mu - \frac{1}{n} +  \frac{ |N_i(t)| }{  m + |N_i(t)  | } \, \mu \, ,
\end{align}
	and so 
		\begin{align}
	\xb_i(t+2) - \xib(t+1)  <  2\mu- \frac{1}{n}  \, .
\end{align}
For	$\mu= \frac{1}{3n}$, we actually obtain $\xb_i(t+2) - \xib(t+1)  <  -\mu$, a contradiction.
\end{proof}

Combining Proposition~\ref{prop:addlink} and  Theorem~\ref{thm:kinetic2}, we derive that 
	the number of times a link is added, and so the number of changes in the influence graph, is finite.
\vspace{0.2cm}
\begin{thm}\label{thm:finitegraphchanges}
In the HK dynamics on the circle, the influence graph is eventually constant.
\end{thm}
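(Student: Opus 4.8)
The plan is to deduce the theorem from Theorem~\ref{thm:kinetic2} together with Propositions~\ref{prop:simuline} and~\ref{prop:addlink}: the finiteness of $K_2$ limits how often agents can make a ``large'' move, each creation of a new link forces such a large move, so links are created only finitely often; after the last creation the influence graph can only shrink, and a shrinking sequence of finite graphs must stabilize.

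First I would use Theorem~\ref{thm:kinetic2} to bound the number of large moves. Since $\sum_{t\in\IN}\sum_{i\in[n]}\delta(\xib(t),\xib(t+1))^2 = K_2 \le n^2/4 < \infty$ and each pair $(t,i)$ with $\delta(\xib(t),\xib(t+1)) > \tfrac{1}{6n}$ contributes more than $\tfrac{1}{36n^2}$ to this sum, there are only finitely many such pairs; in particular there is a time $T_0$ with $\delta(\xb_\ell(s),\xb_\ell(s+1)) \le \tfrac{1}{6n}$ for all $s \ge T_0$ and all $\ell\in[n]$. Next, I would argue that no link is created after time $T_0$: if $G_{t+1}$ contained a link not present in $G_t$ for some $t \ge T_0$, then Proposition~\ref{prop:simuline} would produce an agent with a new left neighbour but no new right neighbour at time $t+1$, and Proposition~\ref{prop:addlink} would then produce an agent moving by more than $\tfrac{1}{6n}$ at time $t+1$ or $t+2$, contradicting the choice of $T_0$ (both $t+1$ and $t+2$ exceed $T_0$). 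Hence $\Edges(G_{t+1}) \subseteq \Edges(G_t)$ for every $t \ge T_0$. Finally, $\bigl(\Edges(G_t)\bigr)_{t\ge T_0}$ is a non-increasing sequence of subsets of the finite set $[n]\times[n]$, hence eventually constant; so there is a time $T$ with $G_t = G_T$ for all $t \ge T$, which is the assertion.

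I expect the only real subtlety here to be bookkeeping rather than genuine difficulty, since the substantive work is already contained in Propositions~\ref{prop:simuline} and~\ref{prop:addlink}: one must choose $T_0$ as a uniform bound valid for all sufficiently large time steps (not just a single step), precisely so that the two-step look-ahead $\{t+1,t+2\}$ in Proposition~\ref{prop:addlink} cannot escape the regime where all moves are small. It is also worth stating explicitly that after $T_0$ edges can only be removed, never re-added, which is exactly what makes the final monotonicity argument go through.
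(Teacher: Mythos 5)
Your proposal is correct and follows exactly the paper's route: the paper derives the theorem in one sentence by combining Proposition~\ref{prop:addlink} (each link creation forces a move exceeding $1/6n$) with Theorem~\ref{thm:kinetic2} (finiteness of $K_2$ bounds the number of such moves), and you have simply filled in the same bookkeeping, including the final observation that once additions cease the edge sets are non-increasing and hence stabilize.
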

\vspace{0.2cm}
That leads us to decompose the HK dynamics into two periods: the first one during which
	the influence graph changes and the second one with a fixed influence graph.
Theorem~\ref{thm:finitegraphchanges} shows that the first period is finite while the second
	one may be infinite as exemplified in~\cite{HMW14}.
Observe that if there is a cut, then the dynamics on the circle is the same as on the line,
	in which case the second phase is of length one and the system freezes
	just after the first period. 

\section{The vector of position differences }\label{sec:gradient}

We define $$x_i^*(t) = \left\{ \begin{array}{ll}
                                                    \vect(\xib(t) , \overline{x}_{i+1}(t)) & \mbox{ if } i\in [n-1] \\
					       \vect( \overline{x}_{n}(t) , \overline{x}_{1}(t) ) & \mbox{ if } i = n \, . 
					       	\end{array} \right. $$
If there is no cut at time $t$, then we have the fundamental identity
	\begin{equation}\label{eq:perimeter}
	x_1^*(t) + \dots + x_n^*(t) = p \, .
	\end{equation}
						
\begin{prop}\label{prop:nonnegative}
At each time $t$, there exists a non-negative matrix $A_t$ such that $$x^*(t+1) = A_t \, x^*(t) \, .$$
\end{prop}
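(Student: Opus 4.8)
The plan is to write the Hegselmann--Krause update rule in a way that expresses each $x_i^*(t+1)$ as a nonnegative linear combination of the $x_j^*(t)$'s. The basic observation is that, because agents do not cross and (by the discussion preceding the proposition, e.g.\ Proposition~\ref{prop:simuline}) the dynamics respects the cyclic order, the difference $\vect(\xib(t+1),\overline{x}_{i+1}(t+1))$ can be read off from the positions of the agents at time $t$ that lie ``between'' $i$ and $i+1$ in a suitable sense. Concretely, I would first fix a time $t$ and assume there is no cut (the cut case reduces to the line, where the statement is classical and in any event one can treat each cut-segment separately). Writing the update \eqref{eq:hk} as $\vect(\xib(t),\xib(t+1)) = \frac{1}{|N_i(t)|}\sum_{j\in N_i(t)} \vect(\xib(t),\xjb(t))$, I would like to telescope: for consecutive agents $i$ and $i+1$,
\[
x_i^*(t+1) \;=\; \vect\big(\xib(t+1),\overline{x}_{i+1}(t+1)\big)
\;=\; x_i^*(t) \;-\; \vect\big(\xib(t),\xib(t+1)\big) \;+\; \vect\big(\overline{x}_{i+1}(t),\overline{x}_{i+1}(t+1)\big),
\]
valid when the three vectors compose additively, which holds because all the displacements and the gap $x_i^*(t)$ are small (bounded by $r+2\cdot(\text{small})$, and one controls this using Theorem~\ref{thm:kinetic2} or directly the fact that a single HK step moves an agent by at most $r$).

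The second step is to expand the two displacement terms. Each displacement $\vect(\xib(t),\xib(t+1))$ is an average of signed gaps $\vect(\xib(t),\xjb(t))$ over $j\in N_i(t)$, and each such gap is itself a $\pm$-sum of consecutive $x_k^*(t)$'s along the arc from $i$ to $j$. Substituting, $x_i^*(t+1)$ becomes an integer-coefficient combination (with rational denominators $1/|N_i(t)|$ and $1/|N_{i+1}(t)|$) of the $x_k^*(t)$'s. The crux is to check that all coefficients come out $\geq 0$. The intuition is the same as on the line: moving toward your neighbors can only shrink or keep a gap from becoming negative in a controlled way; more precisely, the coefficient of $x_k^*(t)$ in $x_i^*(t+1)$ should be $\big(\#\{j\in N_i(t): j \text{ on the far side of gap }k\} \big)/|N_i(t)|$ minus a symmetric contribution from $i+1$, and the non-crossing property forces these to balance so that the net is a genuine convex-type weight. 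I would make this precise by the standard bookkeeping: for a gap $x_k^*(t)$ with $k$ strictly between $i$ and $i+1$ going one way around, the contribution is $(\text{something})\geq 0$; for $k$ on the other arc it is $0$ after cancellation; and the gap $x_i^*(t)$ itself gets coefficient $1 - (\text{fraction of }N_i(t)\text{ strictly past }i+1) - (\text{fraction of }N_{i+1}(t)\text{ strictly before }i) \geq 0$, again because each of those fractions is small.

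The main obstacle I anticipate is precisely this sign/cancellation check across the wrap-around, keeping careful track of which agents in $N_i(t)$ and $N_{i+1}(t)$ lie on which side and ensuring the additive composition of $\vect$ is legitimate everywhere it is used — the circle makes this genuinely more delicate than the line, since $\vect$ is only locally additive, and one has to argue that no arc appearing in the computation ever has total length close to $p$. I expect this to be handled by invoking non-crossing (so the arcs stay ``short'') together with the bound $r=1$ and the smallness of single-step displacements. Once the coefficients are shown nonnegative, defining $A_t$ to be the matrix of these coefficients gives $x^*(t+1) = A_t\, x^*(t)$ with $A_t \geq 0$, which is exactly the claim. (One does not even need $A_t$ to be stochastic here, only nonnegative, which relaxes the bookkeeping slightly.)
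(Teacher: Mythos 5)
Your overall route is the same as the paper's: telescope $x_i^*(t+1) = x_i^*(t) + \vect(\overline{x}_{i+1}(t),\overline{x}_{i+1}(t+1)) - \vect(\xib(t),\xib(t+1))$, expand each displacement as an average of signed gaps and each gap as a sum of consecutive $x_k^*(t)$'s, and read off the coefficient matrix; the coefficients you describe informally do agree with the explicit formula (\ref{eq:entries}) that the paper derives in terms of $\ell_i, r_i, \ell_{i+1}, r_{i+1}$. The additivity worry you raise about $\vect$ on the circle is handled in the paper simply by the standing assumption $r<p/6$, so every arc occurring in the computation stays well below length $p/2$.

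The gap is at the step you yourself call the crux. You defer the nonnegativity of the coefficients to ``standard bookkeeping,'' and the one concrete justification you offer --- that the coefficient of $x_i^*(t)$ equals $1$ minus two fractions, each of which is small --- is false as stated: the fraction of $N_i(t)$ lying strictly to the right of $i$ can be arbitrarily close to $1$ (when almost all of $i$'s neighbors sit to its right), and similarly for the fraction of $N_{i+1}(t)$ lying at or to the left of $i$. What actually saves the sign is not smallness of either term but the interleaving of the two neighborhoods: because agents do not cross and the radius is uniform, $\ell_i \leq \ell_{i+1}$ and $r_i \leq r_{i+1}$, whence $\frac{r_i-i}{r_i-\ell_i+1} \leq \frac{r_{i+1}-i}{r_{i+1}-\ell_{i+1}+1}$, which is exactly what makes the diagonal entry nonnegative (and the sum of your two fractions at most $1$). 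The paper isolates this mechanism as two elementary arithmetic inequalities (Claims~\ref{claim:arithmetic1} and~\ref{claim:arithmetic2}) and applies them case by case to the entries of (\ref{eq:entries}). Without this monotonicity input, or an equivalent substitute, the sign check does not go through, so the proposal as written identifies the right decomposition but does not yet prove the proposition.
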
						

\begin{proof}
We fix time $t$ and we often omit $t$ in the notation as no confusion can arise.

First we observe that under the condition $r < p/6$, we have
	$$ x_i^*(t+1) = x_i^*(t) + \frac{1}{|N_{i+1}(t)|} \sum_{k\in N_{i+1}(t)} \vect(\overline{x}_{i+1}(t) , \xkb(t)) 
	- \frac{1}{|N_i(t)|} \sum_{k\in N_{i}(t)} \vect(\overline{x}_{i}(t) , \xkb(t))\, .$$
From this expression of $ x_i^*(t+1)$, we derive that $x_i^*(t+1) = \sum_{k \in [n]} A_{i \, k}  \, x_k^*(t) $ with

\begin{equation}\label{eq:entries}
	A_{i \, k} = \left\{\begin{array}{ll}
		\frac{k-\ell_i +1}{r_i - \ell_i +1} & \mbox{ if } \ell_i \leq k \leq \ell_{i+1} - 1 \\ \\
		\frac{k-\ell_i +1}{r_i - \ell_i +1} - \frac{k-\ell_{i+1} +1}{r_{i+1} - \ell_{i+1} +1} & \mbox{ if } \ell_{i + 1}  \leq k \leq i - 1 \\ \\
		\frac{r_{i+1} -k}{r_{i+1} - \ell_{i+1} +1} - \frac{r_i -k }{r_i - \ell_i +1}  & \mbox{ if } i  \leq k \leq r_{i }- 1 \\ \\
		\frac{r_{i+1} -k}{r_{i+1} - \ell_{i+1} +1} & \mbox{ if } r_i \leq k \leq r_{i+1} - 1 \\ \\ 
		0 & \mbox{ else} \, .
		\end{array}
		\right.  \end{equation}
Then we prove that $A$ is non negative with two simple arithmetical inequalities:

\begin{claim}\label{claim:arithmetic1}
If $\ell,\  \ell', \ r,\ r', j$ are five integers such that $r'\geq r$, $\ell' \geq \ell$ and $r \geq j \geq \ell -1 $, then 
	$$ (r' - j)(r- \ell + 1) \geq  (r - j)(r' - \ell' + 1) \, .$$
\end{claim}

\begin{claim}\label{claim:arithmetic2}
If $\ell,\  \ell', \ r,\ r', j$ are five integers such that $r'\geq r$, $\ell' \geq \ell$ and $r' + 1 \geq j \geq \ell'  $, then 
	$$ (j - \ell)(r' - \ell' + 1) \geq  (j - \ell')(r - \ell + 1) \, .$$
\end{claim}
Observe that some entries $A_{i \, k}$ with $\ell \geq k \geq r'-1$ may be null if $\ell = \ell'$ or $r = r'$.
Moreover the $i$-th line of $A$ is null if both $\ell = \ell'$ and $r = r'$, in  which case agents~$i$ and $i+1$ 
	have  merged by time $t+1$ since they have the same neighbors at time $t$.
\end{proof}

\begin{prop}\label{prop:stochastic}
If  there is no cut at time~$t$, then $A_t$ is column-stochastic.
\end{prop}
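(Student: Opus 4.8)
The plan is to show that the columns of $A_t$ sum to $p$-independent value $1$ by a direct summation of the entries given in~\eqref{eq:entries}. Since there is no cut at time $t$, the fundamental identity~\eqref{eq:perimeter} holds, so $\onevec^\top x^*(t) = p$ for all $t$. Combined with Proposition~\ref{prop:nonnegative}, which gives $x^*(t+1) = A_t\, x^*(t)$, we would get $\onevec^\top A_t\, x^*(t) = p = \onevec^\top x^*(t)$ at each time $t$. This alone does not immediately force $A_t$ to be column-stochastic, because it is only one linear constraint; so instead I would argue directly from the explicit formula for the entries $A_{i\,k}$.

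First I would fix a column index $k \in [n]$ and determine, for each agent $i$, which of the five cases in~\eqref{eq:entries} the pair $(i,k)$ falls into. The key observation is that $k$ contributes to row $i$ precisely when $k$ lies in the ``span'' of the update of $x_i^*$, that is, roughly when $\ell_i \le k \le r_{i+1}-1$ (indices read cyclically). For such $i$ the entry $A_{i\,k}$ is a difference (or a single term at the two ends) of the two ``triangular'' expressions $\frac{k-\ell_i+1}{r_i-\ell_i+1}$ and $\frac{k-\ell_{i+1}+1}{r_{i+1}-\ell_{i+1}+1}$ — equivalently, of the corresponding ``descending'' forms $\frac{r_i-k}{r_i-\ell_i+1}$ and $\frac{r_{i+1}-k}{r_{i+1}-\ell_{i+1}+1}$, using that the ascending and descending forms are complementary (sum to $1$) for a fixed row. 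I would then sum $\sum_i A_{i\,k}$ and watch the telescoping: the term attached to agent $i$ from the ``$i+1$ part'' of its formula cancels the term attached to agent $i+1$ from the ``$i$ part'' of its formula. After cancellation only the two boundary agents survive — the smallest $i$ with $k$ in range contributes an ascending term equal to $1$ (because at that end $k = \ell_i$ gives $\frac{1}{\dots}$ and the running sum builds up to $1$), and symmetrically at the other end the descending term completes to $0$; the net is $1$.

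The main obstacle I anticipate is bookkeeping the cyclic index arithmetic correctly: the quantities $\ell_i(t)$ and $r_i(t)$ are defined only modulo $n$, and the ranges $\ell_i \le k \le \ell_{i+1}-1$ etc.\ must be interpreted as arcs on $\IZ/n\IZ$. One must check that, walking $i$ around the cycle once, each column index $k$ is ``entered'' exactly once (by the ascending boundary term) and ``exited'' exactly once (by the descending boundary term), with all intermediate contributions telescoping to zero. The hypothesis ``no cut at time $t$'' is exactly what guarantees that the neighborhoods wrap consistently around the whole circle so that this accounting closes up; if there were a cut, a column could fail to telescope and the sum would not be $1$. A clean way to organize this is to write, for fixed $k$, the partial sums $\sum_{i \le m} A_{i\,k}$ and show by induction on $m$ that this equals the ascending expression $\frac{k-\ell_{m+1}+1}{r_{m+1}-\ell_{m+1}+1}$ (suitably truncated to $[0,1]$) whenever $k$ is in the relevant arc, so that after a full loop the value returns to $1$. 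Once the column sum is established to be $1$ and nonnegativity is already known from Proposition~\ref{prop:nonnegative}, column-stochasticity follows.
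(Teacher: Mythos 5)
Your approach is correct in outline but genuinely different from the paper's. The paper does \emph{not} sum the explicit entries of~\eqref{eq:entries}; it starts from the single linear constraint $\sum_k x_k^*(t)=\sum_k S_k(t)\,x_k^*(t)$ (exactly the one you dismiss as insufficient) and then manufactures the missing constraints by a perturbation argument: move one agent's position by a small $\varepsilon$ without changing the influence graph, observe that only two consecutive components of $x^*$ change (by $-\varepsilon$ and $+\varepsilon$), and deduce $S_{i-1}=S_i$ for every $i$; since the common value times $n$... rather, since $\sum_k S_k x_k^* = p>0$ with all $S_k$ equal, each $S_k=1$. This forces the paper into a second, somewhat delicate case where some gap equals the radius exactly (so that no perturbation preserves the graph), handled by replacing the radius $1$ with an intermediate $\varrho\in\,]1,d[$. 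Your direct telescoping avoids both the perturbation and that degenerate case entirely, and it does close up: writing $a_i=\frac{k-\ell_i+1}{r_i-\ell_i+1}$ and $d_i=\frac{r_i-k}{r_i-\ell_i+1}=1-a_i$, the middle two cases of~\eqref{eq:entries} both equal $a_i-a_{i+1}$, the nonzero rows of column $k$ form one cyclic arc, and the sum collapses to $(1-a_{i_2+1})+\sum(a_i-a_{i+1})+a_{i_1}=1$; the no-cut hypothesis enters exactly where you say, namely to guarantee $\ell_{i+1}\le r_i$ (so the boundary cases of a row cannot overlap) and to make the lifts of $\ell_i,r_i$ increase by exactly $n$ over one loop so that each column is entered and exited once. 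Two caveats on your sketch: the boundary rows do not each ``contribute $1$'' --- they contribute the complementary pair $1-a_{i_2+1}$ and $a_{i_1}$, and only the full telescoped sum equals $1$; and the modular bookkeeping you flag is real work that still has to be written out (including the case where the entering and exiting rows coincide). The trade-off is clear: the paper's argument is short and index-free but needs the perimeter identity, $p>0$, and an extra case analysis; yours is self-contained and purely algebraic but demands careful cyclic index management.
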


\begin{proof}

If there is no cut at time $t+1$ and so at time $t$, we use the identity~(\ref{eq:perimeter}) to obtain
	\begin{equation} \label{eq:stochastic}
	\sum_{k \in [n]}  x_k^*(t) = \sum_{k \in [n]} S_k(t) \,  x_k^*(t) \, ,
	\end{equation}
	where $S_k(t)$ is the sum of $A_t$'s entries in the $k$-th column.
	
We now prove that  for every $k \in [n]$, $S_k(t) = 1$.
Again we omit $t$ in the notation as no confusion can arise. 
We consider two cases:
\begin{enumerate}
\item For each index $k \mod n$,  $ \delta(  \xb_{k -1}   , \xb_k   ) < 1$ or $\delta( \xkb   , \xb_{k + 1}   ) <1$.

We now fix an index $i$ and show that $S_{i-1} = S_i $.
For sufficiently small but non-null variations of $i$'s position on the circle,
	there is  no change in the influence graph.
Formally, in the case $ \delta(  \xb_{i-1}   , \xib   ) < 1 $, there exists $\varepsilon >0$ such that 
	for the position vector $\yb$ whose each
	entry $\yb_k$ is equal to $\xb_k $, except $\yb_i = \xib  - \varepsilon$,
	the influence graph is the same as for $\xb $.
Hence  $$ y^* = \left\{ \begin{array}{ll}
					x_k^*  - \varepsilon & \mbox{ if } k= i -1 \\
					x_{k}^*  + \varepsilon & \mbox{ if } k= i  \\
					x_k^*   & \mbox{ otherwise. } 
					\end{array} \right. \, $$
From (\ref{eq:stochastic}) and $ \varepsilon >0$, it follows that  $S_{i-1} = S_i$. 

Since $p$ is positive, then
	we conclude that for every $k \in [n]$, $S_k = 1$.
	
\item For some index $k \mod n$, $ \delta(  \xb_{k -1}   , \xb_k   ) =   \delta( \xkb   , \xb_{k + 1}   )  = 1  $.
Let us denote $$d = \min \{ \delta(\xib , \xjb ) \ | \ i \in [n],\ j \notin N_i  \} \, $$
	and  let us consider the influence graph for the position vector $\xb $ and the influence radius $\varrho = \frac{d +1}{2}$.
Since there is no cut, all agents have not merged and $d >1$.
Hence   we have $$d > \varrho > 1\, .$$
Therefore 
	$$ \delta(  \xb_{i}   , \xb_j   ) \leq  1 \Rightarrow  \delta(  \xb_{i}   , \xb_j   ) \leq \varrho \, ,$$
 and by definition of $d$, 
	$$ \delta(  \xb_{i}   , \xb_j   ) >  1 \Rightarrow  \delta(  \xb_{i}   , \xb_j   ) \geq d > \varrho \, .$$
In other words,  the influence graph is the same with the influence radius 1 and $\varrho$ when the position vector is $\xb $ .
Since for each index $k$, $ \delta(  \xb_{k -1}   , \xb_k   ) < \varrho   \   \mbox{ and }  \  \delta( \xkb   , \xb_{k + 1}   ) < \varrho$,
	we conclude  as in case~1.
\end{enumerate}	
\end{proof}
Importantly the above proposition does not hold anymore in the case of a cut in the circle.

\begin{prop}\label{prop:rooted}
If there is no cut at time~$t$, then the directed graph associated to $A_t$ is rooted.
\end{prop}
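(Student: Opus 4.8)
The plan is to show that the digraph $H$ associated with $A_t$ --- with vertex set $[n]$ and an edge $i\to k$ exactly when $(A_t)_{ik}>0$ --- has a unique source in its condensation, equivalently a vertex from which every vertex is reachable, which is what rootedness means. The source strongly connected components are exactly the minimal nonempty \emph{closed} vertex sets, where a set $C$ is closed when it receives no edge from outside, i.e.\ when the support of column $k$ of $A_t$ lies in $C$ for every $k\in C$; so it suffices to exclude two disjoint nonempty closed sets. Throughout I would use that, there being no cut, consecutive agents are mutual neighbours, whence $\ell_{i+1}\le i$ and $r_i\ge i+1$, and that agents keep their cyclic order, whence $\ell_i\le\ell_{i+1}$ and $r_i\le r_{i+1}$.

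First I would read off the forced edges of $H$ from~\eqref{eq:entries}. The first- and fourth-case entries $\frac{k-\ell_i+1}{r_i-\ell_i+1}$ and $\frac{r_{i+1}-k}{r_{i+1}-\ell_{i+1}+1}$ are strictly positive on their (possibly empty) index ranges, whereas the second- and third-case entries are affine in $k$; being nonnegative by Claims~\ref{claim:arithmetic1} and~\ref{claim:arithmetic2}, such an entry can, on a nonempty range, vanish only at one of its endpoints --- unless the whole range carries zeros, which one checks forces $i$ and $i+1$ to merge and the $i$-th row of $A_t$ to vanish. Dually: for every gap $b$, taking $\mu(b)$ to be the largest $a$ with $\ell_a\le b$ places $b$ in the first case of row $\mu(b)$, so that $\mu(b)\to b$ is an edge of $H$ and $b<\mu(b)\le r_b$; symmetrically, taking $\nu(b)$ to be the smallest $a$ with $r_{a+1}\ge b+1$ places $b$ in the fourth case of row $\nu(b)$, so that $\nu(b)\to b$ is an edge and $\ell_{b+1}-1\le\nu(b)<b$. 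Hence any closed set containing a gap $b$ also contains $\mu(b)$ and $\nu(b)$, and in fact every gap from which $b$ is reachable in $H$.

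It remains to show that a nonempty closed set $C$ contains every gap whose $A_t$-row does not vanish identically; a vanishing row makes the gap a sink of $H$, which by the column-stochasticity of $A_t$ (Proposition~\ref{prop:stochastic}) still has an incoming edge and so never belongs to a source component, so this would rule out two disjoint nonempty closed sets and complete the proof. For this I would fix $b_0\in C$ and analyse the set of gaps from which $b_0$ is reachable in $H$: it contains the forward iterates $\mu(b_0),\mu^2(b_0),\dots$ and the backward iterates $\nu(b_0),\nu^2(b_0),\dots$, and one uses the second- and third-case edges, together with the endpoint-only location of their zeros, to show that this set cannot skip a non-merged gap as it runs once around the cycle of gaps.

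I expect this last step to be the main obstacle. A priori the cancellations in the second and third cases of~\eqref{eq:entries} might conspire so that the edges of $H$ only join gaps lying in a single residue class modulo a proper divisor of $n$ --- precisely what would occur if only the first- and fourth-case edges survived and all agents shared a common neighbourhood window whose width divides $n$ nontrivially. Excluding this requires the sharp location of the zeros of the middle-case entries, which are affine and nonnegative and therefore have their zeros confined to range endpoints (at most one per range), and it is exactly here that the no-cut hypothesis is needed.
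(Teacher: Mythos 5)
Your proposal sets up a reasonable framework (rootedness $\Leftrightarrow$ no two disjoint nonempty closed sets, reading off forced edges from~\eqref{eq:entries}), but it does not actually prove the statement: the decisive step is missing, and you say so yourself. The first- and fourth-case edges $\mu(b)\to b$ and $\nu(b)\to b$ genuinely do not suffice --- if every agent's window has the same width $2m+1$ with $m\ge 2$, then $\mu(b)=b+m$ and $\nu(b)=b-m$, so these edges confine you to a residue class of gaps modulo $\gcd(m,n)$ --- so everything hinges on extracting enough strictly positive middle-case entries to link consecutive (non-merged) gaps. For that step you only offer ``one uses the second- and third-case edges, together with the endpoint-only location of their zeros, to show that this set cannot skip a non-merged gap,'' and you flag it as the main obstacle. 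That is the whole content of the proposition, and it is not routine: a middle-case entry is affine in $k$ and can vanish at an endpoint of its range (e.g.\ the third-case range can be the single point $k=i$, giving only a useless self-loop that vanishes without forcing a merge), so one must argue carefully about \emph{which} entries survive, not merely that at most one per range dies. As written, the proof is a plan with a hole exactly where the difficulty lies.

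For comparison, the paper avoids this analysis entirely. It first isolates a clean combinatorial lemma: the link $(i,i+1)$ is present in $H$ if and only if agents $i$ and $i+1$ have not merged by time $t+1$ (equivalently $r_{i+1}-1\ge i+1$, using the no-cut hypothesis). It then builds an explicit spanning tree rooted at the first non-merged pair, walking once around the circle: whenever $i+1$ has merged into the block $\{i-k+1,\dots,i\}$, it attaches $i+1$ by the link $(i-k,i+1)$ from the block's predecessor, and otherwise uses $(i,i+1)$. This sidesteps closed sets, column-stochasticity, and the zero-location analysis of the affine entries altogether. If you want to salvage your route, the missing lemma you need to prove is essentially the paper's: that $A_{i,i+1}>0$ whenever $i$ and $i+1$ have not merged, which immediately kills the residue-class obstruction by chaining consecutive gaps (with merged gaps handled separately, as you began to do via the vanishing rows).
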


\begin{proof}
Let $H_t$ denote the directed graph associated to $A_t$.
As no confusion can arise, we omit $t$ in the notation $H_t,G_t, r_i(t), \dots$.

From  the expression~(\ref{eq:entries}) of $A_t$'s entries, we derive that there is a link $(i,i+1)$ in~$H$
	if and only if $r_{i+1} -1 \geq i +1$.
Because there is no cut, the latter inequality holds if and only if~$i$ and $i+1$ have not merged by time~$t+1$.
That proves  the following lemma.
\begin{lem}
There is a link $(i,i+1)$ in~$H$ if and only if $i$ and $i+1$ have not merged by time~$t+1$.
\end{lem}

Since there is no cut, all the agents have not merged at time~$t+1$; 
	let $i_1$ be the first $i \in [n]$ such that agents~$i$ and~$i+1$  have not  merged.
Thus  $(i_1, i_1+1)$ is an edge in~$H$.
	
Now we inductively construct a spanning tree  rooted at~$i_1$ contained in $H$.
For easier notation, we let $i_1=1$.

\begin{enumerate}
\item At the first step, we have the subtree $T_1$ over the set of nodes $\{1,2\}$  and the link $(1,2)$.
\item Suppose that at the end of the $(i-1)$-th step, the resulting directed graph  $T_{i-1}$ over the set of nodes $\{1, \dots, i \}$
	is a tree rooted  at~$1$  with the last level composed of the sole links $(i-k, i-k+1), \dots, (i-k,i)$ if $k$ denotes
	the unique integer in $\{1, \dots, i - 1\}$	such that
	$$\xb_{i-k+1}(t+1) = \dots =  \xib(t+1)\ \mbox{ and } \ \xb_{i-k}(t+1)
\neq  \xib(t+1)$$
	(cf.\ Figure~\ref{fig:tree}).

\begin{figure}
\begin{tikzpicture}[>=latex']
\begin{scope}[shift={(0,0)}]
\draw [thick,domain=250:290] plot ({8*cos(\x)}, {8*sin(\x)});

\draw [<->,domain=260:280] plot ({7.6*cos(\x)}, {7.6*sin(\x)});
\draw [|-|,domain=260:280] plot ({7.6*cos(\x)}, {7.6*sin(\x)});

\node at  ({7.3*cos(270)}, {7.3*sin(270)}) {$\leq 1$};

\node[draw,circle,fill,label=below:{$\bar{x}_{i-k}$}] at ({8*cos(260)},{8*sin(260)}) {};
\node[draw,circle,fill,label=below:{$\bar{x}_{i-k+1} = \cdots = \bar{x}_i$}] at ({8*cos(280)},{8*sin(280)}) {};
\end{scope}

\begin{scope}[shift={(7,-7.6)}]
\node [draw,circle,label=below:{$i-k$}] (r1) at (0,0) {};
\node [draw,circle,label=right:{$i-k+1$}] (c1) at (2,1) {};
\node [label=right:{$\vdots$}] (c2) at (1.7,0) {};
\node [draw,circle,label=right:{$i$}] (c3) at (2,-1) {};

\draw [->] (r1) -- (c1);
\draw [->] (r1) -- (c2);
\draw [->] (r1) -- (c3);
\draw [->] (-0.5,0) -- (r1);
\end{scope}

\begin{scope}[shift={(4,-7.6)}]
\draw [very thick,->,style={decorate, decoration={snake,
post length=1mm}}] (0,0) -- (1,0);
\end{scope}
\end{tikzpicture}
\caption{Translation of merged agents into the construction of tree $T_{i-1}$;
all positions are at time $t+1$}
\label{fig:tree}
\end{figure}
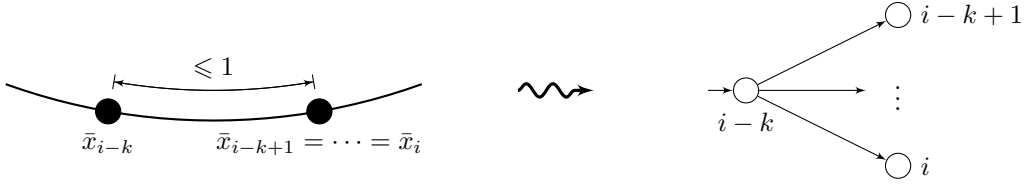

\begin{enumerate}
\item If $\xb_{i+1}(t+1) =  \xib(t+1)$, then $\xb_{i-k}(t+1) \neq  \xb_{i+1}(t+1)$ and 
	$(i-k,i+1)$  is a link in $H$.
\item Otherwise $i$ and $i+1$ have not merged by time $t+1$, and  $(i,i+1)$ is a link in $H$.
\end{enumerate}	
Then we extend the subtree $T_{i-1}$ to the set of nodes $\{1, \dots, i + 1 \}$  by adding
	the link 	$(i-k,i+1)$ or $(i,i+1)$, accordingly.
\end{enumerate}	
By construction, the resulting directed graph $T_{n-1}$ is a  tree,  rooted at~$1$, and all its links are in $H$. 
\end{proof}

As a consequence of the theorem on the backward product of line-stochastic matrices with oriented
	associated graphs (or equivalently, on the forward product of column-stochastic matrices with 
	rooted associated graphs) proved by Cao, Morse, and Anderson~\cite{CMA08a}, we derive the
	following convergence result on the sequence $\left (x^*(t) \right
)_{t\in \IN}$, taking into account the fact that matrix~$A_t$ is eventually
constant.

\begin{cor}\label{cor:conv*}
If there is never a cut, then 
the sequence $\big( x^*(t)\big)_{t\in \IN}$ is convergent and $\lVert x^*(t) - v  \lVert = O \left(\varrho^t \right)$
	for  some $\varrho\in]0,1[$ if     $v = \lim x^*(t) $.
	\end{cor}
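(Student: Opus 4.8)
The plan is to invoke the results established just above---Propositions~\ref{prop:nonnegative}, \ref{prop:stochastic}, and~\ref{prop:rooted}, together with Theorem~\ref{thm:finitegraphchanges}---and feed them into the convergence theorem of Cao, Morse, and Anderson~\cite{CMA08a}. First I would recall the precise statement we intend to use: if $(A_t)_{t\in\IN}$ is a sequence of column-stochastic matrices (equivalently, their transposes are row-stochastic), each nonnegative with a rooted associated graph, then the forward products $A_{t}A_{t-1}\cdots A_{0}$ converge, and if moreover the sequence is eventually constant equal to some $A$, the convergence is geometric with rate governed by the subdominant eigenvalue of $A$. Since we are in the no-cut regime, Proposition~\ref{prop:nonnegative} gives nonnegativity of each $A_t$, Proposition~\ref{prop:stochastic} gives column-stochasticity, and Proposition~\ref{prop:rooted} gives that the associated graph $H_t$ is rooted; these are exactly the hypotheses needed.

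Next I would observe that ``no cut'' is preserved forever: the excerpt already notes that once the system is cut it stays cut, so ``never a cut'' up to the present is the same hypothesis as ``never a cut at any time,'' and hence the three propositions apply at \emph{every} time $t$. Then I would apply Theorem~\ref{thm:finitegraphchanges}: the influence graph $G_t$ is eventually constant, say equal to $G$ for all $t\ge t_0$. Since the entries of $A_t$ in~\eqref{eq:entries} depend only on the quantities $\ell_i(t), \ell_{i+1}(t), r_i(t), r_{i+1}(t)$---that is, only on the combinatorial structure of $G_t$---the matrix $A_t$ is itself eventually constant, $A_t = A$ for $t\ge t_0$. Writing $x^*(t+1) = A_t x^*(t)$ iteratively, for $t\ge t_0$ we get $x^*(t) = A^{\,t-t_0}\, x^*(t_0)$, a genuine matrix power.

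Finally, I would apply the Cao--Morse--Anderson theorem to the tail: the product $A^{t-t_0}$ converges to a limit $P$, and since $A$ is a fixed column-stochastic matrix with rooted associated graph, $1$ is a simple eigenvalue and all other eigenvalues have modulus strictly less than $1$; letting $\varrho$ be any number in $]0,1[$ strictly larger than the modulus of the subdominant eigenvalue of $A$, standard estimates on powers of a matrix (Jordan form, or the spectral-gap bound for such stochastic matrices) give $\lVert A^{t-t_0} - P\rVert = O(\varrho^{\,t})$. Setting $v = P\, x^*(t_0) = \lim_t x^*(t)$ yields $\lVert x^*(t) - v\rVert = O(\varrho^{\,t})$ for $t\ge t_0$, and adjusting the implied constant covers the finitely many early times $t < t_0$.

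I expect the only real subtlety to be bookkeeping about \emph{which} hypothesis of the black-box theorem matches \emph{which} proposition: the literature states the convergence result in several dual forms (backward products of row-stochastic matrices with ``oriented'' graphs versus forward products of column-stochastic matrices with rooted graphs), and one must be careful that the transpose conventions line up with the direction of the edges in $H_t$ as defined in the proof of Proposition~\ref{prop:rooted}. The geometric rate is then immediate from the eventual constancy, so the remaining content of the proof is essentially this one alignment check plus the routine observation that $A_t$ is determined by $G_t$.
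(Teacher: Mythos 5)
Your proposal matches the paper's own derivation: the corollary is obtained there exactly by feeding Propositions~\ref{prop:nonnegative}, \ref{prop:stochastic} and~\ref{prop:rooted} into the Cao--Morse--Anderson theorem on forward products of column-stochastic matrices with rooted associated graphs, ``taking into account the fact that matrix~$A_t$ is eventually constant'' (which, as you note, follows from Theorem~\ref{thm:finitegraphchanges} because the entries~(\ref{eq:entries}) depend only on the $\ell_i$'s and $r_i$'s). The one caution concerns your side remark that a fixed column-stochastic matrix with rooted associated graph automatically has $1$ as a simple eigenvalue with all other eigenvalues of modulus strictly less than $1$ --- this is false in general (a cyclic permutation matrix is column-stochastic with a rooted graph and has all its eigenvalues on the unit circle), so the geometric rate must come from the cited theorem itself (or from~\cite{CBFN14}, where the paper records $\varrho = 1-n^{-n}$) rather than from a bare spectral-gap argument based on rootedness alone.
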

In~\cite{CBFN14}, we proved that Corollary~\ref{cor:conv*} holds with $\varrho=1-n^{-n}$.

\section{Convergence of the  HK dynamics on the circle}\label{sec:conv}

We now put all the pieces together to show the convergence of the  HK dynamics on the circle.
\vspace{0.2cm}
\begin{thm}
An HK system on the circle converges asymptotically.
\end{thm}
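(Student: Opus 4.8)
The plan is to put together the three main ingredients already available: the eventual constancy of the influence graph (Theorem~\ref{thm:finitegraphchanges}), the finiteness of the quadratic kinetic energy (Theorem~\ref{thm:kinetic2}), and the convergence of the difference vector in the uncut regime (Corollary~\ref{cor:conv*}). By Theorem~\ref{thm:finitegraphchanges} there is a time $t_0$ after which the influence graph is a fixed graph $G$; we work with $t\ge t_0$ throughout. I would first dispose of the case in which the system is cut: from time $t_0$ on the dynamics is then the same as the Hegselmann--Krause dynamics on the line run with the fixed graph $G$, and this dynamics splits over the connected components of $G$. On each component the update has the form $\xb\mapsto A\,\xb$ for a fixed row-stochastic matrix $A$ with strictly positive diagonal and strongly connected support, hence primitive; by Perron--Frobenius $A^t$ converges, so $\xb(t)$ converges in the cut case.

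From now on I assume there is never a cut. Corollary~\ref{cor:conv*} then yields a vector $v$ with $x^*(t)\to v$ and $\lVert x^*(t)-v\rVert = O(\varrho^t)$ for some $\varrho\in]0,1[$ (one may take $\varrho=1-n^{-n}$ by~\cite{CBFN14}). It remains to upgrade convergence of $x^*(t)$ to convergence of $\xb(t)$ itself. The key point is that once the influence graph equals $G$, the one-step displacement of each agent is a \emph{fixed} linear functional of the difference vector: writing each $\vect(\xib(t),\xjb(t))$ with $j\in N_i$ as a signed partial sum of consecutive entries of $x^*(t)$ and substituting into~\eqref{eq:hk} — exactly the computation underlying the entries in~\eqref{eq:entries}, of which $\vect(\xib(t),\xib(t+1))$ is one of the two constituent sums — produces constants $(c_{ik})_{k\in[n]}$ depending only on $G$ with
$$\vect\big(\xib(t),\xib(t+1)\big) \;=\; \sum_{k\in[n]} c_{ik}\,x_k^*(t) \qquad\text{for all } t\ge t_0\, .$$

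Next I would rule out a residual rotation of the whole configuration around the circle. By Theorem~\ref{thm:kinetic2}, $\sum_{t\in\IN}\delta(\xib(t),\xib(t+1))^2 \le K_2 < \infty$, so $\delta(\xib(t),\xib(t+1))\to 0$ and hence $\vect(\xib(t),\xib(t+1))\to 0$; letting $t\to\infty$ in the identity above and using $x^*(t)\to v$ gives $\sum_{k} c_{ik}v_k = 0$ for every $i$. Therefore
$$\big|\vect\big(\xib(t),\xib(t+1)\big)\big| \;=\; \Big|\sum_{k} c_{ik}\big(x_k^*(t)-v_k\big)\Big| \;\le\; C\,\lVert x^*(t)-v\rVert \;=\; O(\varrho^t)\, ,$$
so $\sum_{t\in\IN}\delta(\xib(t),\xib(t+1)) < \infty$ for each $i$; that is, $K_1$ is finite. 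Consequently the partial sums $\sum_{s=t_0}^{t-1}\vect(\xib(s),\xib(s+1))$ form a Cauchy sequence in $\IR$, so a lift of the trajectory of agent $i$ converges in $\IR$, whence $\xib(t)$ converges in $\SCC$; this completes the uncut case, and with it the theorem.

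The main obstacle is precisely this last step in the uncut case: from the convergence of the difference vector $x^*(t)$ one must exclude that the limiting shape keeps drifting around the circle at a nonzero asymptotic speed, and $K_2<\infty$ by itself only forces the per-step displacements to tend to $0$, not to be summable. What makes the argument close is the interplay of the \emph{geometric} rate of Corollary~\ref{cor:conv*} with the vanishing $\sum_k c_{ik}v_k = 0$ of the displacement functional at the limit difference vector, which together turn the bound on the displacements into a summable, geometrically decaying one.
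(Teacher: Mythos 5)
Your proof is correct, and in the uncut case it takes a genuinely different route from the paper's. The paper also reduces to Corollary~\ref{cor:conv*}, but it does so through the Lyapunov function: it writes $W(t)-W(t')$ as a sum over the (now fixed) edge set, bounds each term $|\delta(\xib(t),\xjb(t))-\delta(\xib(t'),\xjb(t'))|$ by a telescoping sum of the $|x_k^*(t)-x_k^*(t')|$, deduces $W(t)-W(\infty)\le Cn^3\varrho^t$, and then invokes Proposition~\ref{prop:W} to get $\delta(\xib(t),\xib(t+1))=O(\varrho^{t/2})$, which is summable. You instead express the one-step displacement $\vect(\xib(t),\xib(t+1))$ as a fixed linear functional $\sum_k c_{ik}x_k^*(t)$ of the difference vector (the same decomposition into signed partial sums that underlies~\eqref{eq:entries}), use $K_2<\infty$ to force $\sum_k c_{ik}v_k=0$ — explicitly ruling out a residual rotation, which the paper's Lyapunov route handles implicitly — and conclude $\delta(\xib(t),\xib(t+1))=O(\varrho^t)$. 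Both arguments rest on the same two pillars (eventual constancy of the graph and Corollary~\ref{cor:conv*}); yours is more direct, avoids the quantitative manipulation of $W$, and even yields the slightly better rate $\varrho^t$ in place of $\varrho^{t/2}$, while the paper's version reuses machinery already set up in Section~\ref{sec:lyapunov}. Your treatment of the cut case via a fixed primitive row-stochastic matrix and Perron--Frobenius is more explicit than the paper's appeal to the known convergence of HK on the line, and is also fine; in spirit your displacement-functional argument is close to the alternative proof of Section~\ref{sec:v2}, except that you route it through $x^*(t)$ and Corollary~\ref{cor:conv*} rather than through the recursion for $\dot{x}(t)$.
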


\begin{proof}
If there is a cut, the dynamics on the circle is the same as on the line, in which case
	the convergence is well-known.
	
In the case no cut ever occurs, we prove a refinement of Theorem~\ref{thm:kinetic2}.
Let $t_0$ be the time at which the influence graph does not change anymore, and let $E$
	denote the set of links in the final influence graph.
By definition of the Lyapunov function $W$, for any $t$ and $t'$, $t_0\leq t\leq t'$, 
\begin{equation}\label{eq:W}
	W(t) - W(t') = \sum_{(i,j) \in E} \delta(\xib(t), \xjb(t))^2 -  \delta(\xib(t'), \xjb(t'))^2 \, .
	\end{equation}
We have $$ \delta(\xib(t), \xjb(t))  + \delta(\xib(t'), \xjb(t')) \leq 2r \leq p \, .$$
Moreover since $r<p/2$, we obtain
	\begin{equation*}\begin{split}
	 |\delta(\xib(t), \xjb(t))  - \delta(\xib(t'), \xjb(t'))| &
\leq 
	\left| \delta(\xib(t), \xb_{i+1}(t))  - \delta(\xib(t'),
\xb_{i+1}(t'))\right|  + \dots + \\ & \qquad +\left | \delta(\xb_{j-1}(t), \xjb(t))  - \delta(\xb_{j-1}(t'), \xjb(t')) \right | \\
	  & \leq  \left | x_i^*(t) -  x_i^*(t') )\right |  + \dots +\left |  x_{j-1}^*(t) -  x_{j-1}^*(t')  \right | \ .
	  \end{split} 
\end{equation*}
Then we use the above inequalities to bound each term in~(\ref{eq:W}). 

From Corollary~\ref{cor:conv*}, it follows that if there is never a cut, then 
	$$ W(t) - W(\infty) \leq C n^3 \varrho^t  \, $$
	for some positive constant $C$.
By Proposition~\ref{prop:W}, for each $i\in [n]$ 
	$$\delta(\xib(t) , \xib(t+1))^2 \leq  W(t) - W(\infty) \, . $$
Therefore $$	\delta(\xib(t) , \xib(t+1)) \leq \sqrt{C} n^{3/2} \varrho^{t/2}  \, $$
	and so $	\delta(\xib(t) , \xib(t+1)) = O \left ( \varrho^{t/2}  \right )$.
	
This establishes the convergence of each sequence $\big( \xib(t)\big)_{t\in \IN}$.

\end{proof}

\section{An alternative proof of convergence}\label{sec:v2}

In the second version of their paper~\cite{HMW15}, Hegarty et al. give an alternative proof of Corollary~\ref{cor:conv*}
	that  uses neither  the column-stochasticity of the matrix~$A_t$ nor the rootedness of its associated graph
	(Propositions~\ref{prop:stochastic} and~\ref{prop:rooted}, respectively).
In fact, their new argument to prove the exponential convergence of $\big( x^*(t)\big)_{t\in \IN}$ also
	works for the position vectors as we will show below.
Thus the resulting proof of the convergence of the HK dynamics on the circle directly follows from the finiteness of 
	the quadratic kinetic energy $K_2$ and the eventual stability of influence graphs.

\vspace{0.2cm}
For simplicity, let us denote by $\dot{x}(t)$ the vector whose $i$-th component is $\vect(\overline{x}_{i}(t) , \xb_{i}(t + 1))$, i.e.,
	$$ \dot{x}_i(t)	=    \frac{1}{|N_{i}(t)|} \sum_{k\in N_{i}(t)} \vect(\overline{x}_{i}(t) , \xkb(t))    \, .$$
Under the condition $r<p/6$, for each $i$'s neighbor $k$ at time $t$ we have
	$$  \vect(\overline{x}_{i}(t) , \xkb(t) )  =  \vect(\overline{x}_{i}(t) , \xib(t -1) )   +  \vect(\overline{x}_{i}(t -1) , \xkb(t -1))   +  \vect(\overline{x}_{k}(t -1 ) , \xkb(t))   \, .$$
Therefore, $$   \dot{x}_i(t) =  - \dot{x}_i(t - 1)    +  \frac{1}{|N_{i}(t)|} \sum_{k\in N_{i}(t)}  \vect(\overline{x}_{i}(t -1) , \xkb(t-1 ))  
                                                                                         +    \frac{1}{|N_{i}(t)|} \sum_{k\in N_{i}(t)}  \vect(\overline{x}_{k}(t -1) , \xkb(t ))  \, .$$
From time $t_0$,   the influence graph is constant and the neighborhood of  each agent does not vary anymore.
For $t > t_0$, it follows that $   \frac{1}{|N_{i}(t)|} \sum_{k\in N_{i}(t)}  \vect(\overline{x}_{i}(t -1) , \xkb(t-1 ))  =  
	  \dot{x}_i(t-1) $, and thus
	  \begin{equation}\label{eq:dotx}
	  \dot{x}(t)  = B  \, \dot{x}(t-1) \,   \end{equation}
	  where $B$ denotes the line-stochastic matrix whose associated graph is the influence graph at time $t_0$ and 
	  with positive entries in the $i$-th line equal to $1/|N_i(t_0)|$.
	  
By Theorem~\ref{thm:kinetic2}, the sequence of vectors 	 $\dot{x}(t)$ tends to the zero vector.
Using~(\ref{eq:dotx}) and  the Jordan normal form of $B$, it follows that 	each component of  $\dot{x}(t)$
	is in  $ O\left( e^{-ct} \right)$, which proves the finiteness of $K_1$, and  thus the convergence of the HK dynamics on the circle.
	  \vspace{0.7cm}

\bibliographystyle{plain}

\bibliography{../agents}

\end{document}